\def\widebar{\accentset{{\cc@style\underline{\mskip10mu}}}}
\def\Widebar{\accentset{{\cc@style\underline{\mskip13mu}}}}
\newtheorem{theorem}{Theorem}
\newtheorem{definition}{Definition}
\newtheorem{proposition}{Proposition}
\newtheorem{corollary}{Corollary}
\begin{document}

\captionsetup[figure]{labelfont={ }, name={Fig.}, labelsep=period} 
\pagestyle{empty}

\title{Uplink Age of Information of Unilaterally Powered Two-way Data Exchanging Systems}
\author{\authorblockN{Yunquan Dong\authorrefmark{1},
                                           Zhengchuan Chen\authorrefmark{2},
                                           and Pingyi Fan\authorrefmark{3} \\
\authorblockA{ \normalsize
\authorrefmark{1}School of Electronic and Information Engineering, \\
                                 Nanjing University of Information Science \& Technology, Nanjing, China \\
\authorrefmark{2}College of Communication Engineering, Chongqing University, Chongqing, China \\
\authorrefmark{3}
                                    Department of Electronic Engineering, Tsinghua University, Beijing, China\\
yunquandong@nuist.edu.cn, czc@cqu.edu.cn, fpy@tsinghua.edu.cn
}
}
}

\maketitle
\thispagestyle{empty}

\vspace{-15mm}
\begin{abstract}
We consider a two-way data exchanging system where a master node transfers energy and data packets to a slave node alternatively.
    The slave node harvests the transferred energy and performs information transmission as long as it has sufficient energy for current block, i.e., according to the best-effort policy.
We examine the freshness of the received packets at the master node in terms of \textit{age of information} (AoI), which is defined as the time elapsed after the generation of the latest received packet.
    We derive average uplink AoI  and  uplink data rate as functions of downlink data rate in closed form.
The obtained results illustrate the performance limit of the unilaterally powered two-way data exchanging system in terms of timeliness and efficiency.
    The results also specify the achievable tradeoff between the data rates of the two-way data exchanging system.

\end{abstract}

\begin{keywords}
Age of information, two-way data exchange, wireless power transfer.
\end{keywords}

\section{Introduction}
The recent prevalence of Internet-of-Things has spawned a plethora of real-time services that require timely data/update exchange~\cite{IoT-2017}.
     In vehicular networks~\cite{Vnet-1-2011, Vnet-2-2011}, for example, vehicles need to share their status information (e.g., position, speed, acceleration) timely to ensure safety.
For these scenarios, neither traditional delay nor traditional throughput is suitable~\cite{Kam-AoI-2015}.
    To be specific, when delay is small, the received data may not always be fresh if the transmissions are very infrequent;
when throughput is large, the received data may also be not fresh if the data undergo large queueing delay~\cite{Dong-Que-2014,Dong-2013-WCSP,Dong-DTL-2012}.
    To convey the freshness of the received information, therefore, a new metric was proposed in~\cite{Yates-2012-age}, i.e., \textit{age of information (AoI)} .

AoI is defined as the elapsed time since the generation of the latest received update~\cite{Yates-2012-age}, i.e., the age of the newest update at the receiver.
    Since AoI is closely related to queueing theory, it has been studied in various queueing systems, e.g., \textit{M/M/1, M/D/1} and \textit{D/M/1}~\cite{Yates-2012-age}, and under several serving disciplines, e.g., first-come-first-served (FCFS)~\cite{Yates-2012-age,Inoue-2017-distri}~and last-generate-first-served (LGFS)~\cite{Sun-2016-mlt-sver}.
The \textit{zero-wait} policy where a new update is served immediately after the completion of previous update was also investigated in~\cite{Sun-2016-zero-wait}.
    Moreover, the authors of~\cite{Yates-2017-erasure} studied the average AoI of transmitting $k$-symbol updates over an erasure channel. 
Although all of these serving disciplines can find many suitable applications, neither of them can be generally optimal.
    This has motivated many studies which minimize the AoI of the system by  scheduling the transmission of updates.
For example, the authors of ~\cite{Ephremides-2016-manage} discussed some protocols in which any arriving updates seeing a busy server or more than one waiting updates would be discarded.
    The method of replacing the waiting updates with newly arriving updates was studied in ~\cite{Ephremides-2016-manage}~\cite{Milcom-2016-erasure}.
The AoI of energy harvesting powered systems has also attracted many attentions.
    Due to the randomness of the energy harvesting process, energy buffers are needed to store the harvested energy.
To this end, the authors of~\cite{Yates-2012-lazy-eh, Yangjing-2017-eh} investigated how buffer size affects the average AoI of the system; the optimal threshold of remaining energy to trigger a new update has been found in ~\cite{Tan-2017-threshold}.
    Moreover, AoI was also investigated in  multi-source~\cite{Yate-2016-multisource}, multi-class~\cite{Huang-2015-multiclass}, multi-hop~\cite{Sun-2017-multihop} scenarios.

    In this paper, we consider the freshness of the uplink transmission from the slave node to the master node in a two-way data exchanging system, as shown in Fig. \ref{fig:net_model}.
We assume that only the master node has a constant energy supply.
    When the master node is not transmitting data, it transfers energy to the slave node using wireless power transfer~\cite{bio-2012-WPT}.
Using the harvested energy, the slave node can then transmit its own data to the master node.
    Since the freshness and effectiveness of uplink transmission are jointly constrained by the information transmission capability of the uplink channel and the energy transfer capability of the downlink channel, the corresponding performance analysis is more complicated and more meaningful.
We also would like to mention that the considered two-way data exchanging model can find many applications in implantable biomedical systems, device-to-device communications, and swarm robotics communications.

In these time-sensitive applications, the transmission time to deliver data packets over downlink and uplink channels cannot be neglected.
    In  this paper, therefore,  we model the transmission time of packets as the \textit{service time} of the data exchanging system.
We then investigate the transmission capability of uplink channel in terms of average uplink AoI and  uplink data rate, which are constrained by the limited energy at the slave node.
We prove that average uplink AoI approaches some constant as downlink data rate $p$ goes to zero and goes to infinity gradually when $p$ is increased.
    We also present uplink data rate $q$ as a function of $p$ in closed form.
We show that as $p$ goes to zero, uplink data rate $q$ is a constant; when $p$ is increased,   uplink data rate goes to zero gradually.
    Since downlink data rate and uplink data rate cannot be optimized at the same time, the obtained results also present the best achievable tradeoff between them.

This paper is organized as follows.
   Section~\ref{sec:2_model} presents the system model and the AoI model.
In Section~\ref{sec:4_aoi_ul}, we derive  average uplink AoI and uplink data rate as functions of downlink data rate $p$, as well as their asymptotic behaviors.
Finally, numerical results are provided in Section~\ref{sec5:simulation} and  our work is concluded in Section~\ref{sec6:conclusion}.

\section{System Model}\label{sec:2_model}

\begin{figure}[!t]
\centering
\includegraphics[width=3.1in]{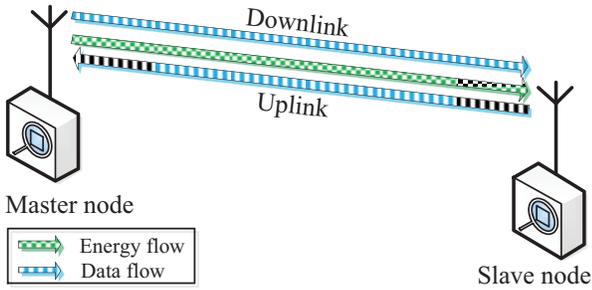}
\caption{Two-way data exchanging system with wireless power transfer. } \label{fig:net_model}
\end{figure}

We consider a two-way data exchanging system as shown in Fig.  \ref{fig:net_model}, where a master node and a slave node exchange their collected data through block fading channels.
    We assume that time is discrete and block length is $T_\text{B}$.
We also assume that the distance between the master node and the slave node is $1$ meter.

\subsection{Channel Model}
    We assume that the channels in both directions suffer from block Rayleigh-fading and additive white Gaussian noise (AWGN), i.e., the power gain $\gamma_n$  follows exponential distribution
                                  \begin{equation*} \label{eq:gama_pdf}
                                        f_\gamma(x) = \lambda e^{-\lambda x}.
                                    \end{equation*}

For both downlink and uplink transmissions, let $P_\text{t}$ be the transmit power, $\texttt{W}$ be the limited system bandwidth, and $\texttt{N}_0$ be the noise spectrum density.
    We further assume that $P_\text{t}$ is small and the received  signal-to-noise-ratio (SNR) is much smaller than unity\footnote{With some minor changes, our analysis can also be used for scenarios with general SNR and general fading models. The obtained results, however, might be more complicated in expression.}.
Hence, the amount of information that can be transmitted in a block would be
\begin{equation*} \label{rt:b_n}
    b_n = T_\text{B}\texttt{W} \log \left(1+ \tfrac{\gamma_n P_\text{t}}{\texttt{WN}_0} \right) \approx \tfrac{\gamma_n P_\text{t}T_\text{B}}{\texttt{N}_0}.
\end{equation*}
It is clear that $b_n$ also follows exponential distribution.
    In addition, we assume that downlink and uplink transmissions use different frequency bands.

\subsection{Data Exchanging Model}
In each block, the master node collects new data and then generates a packet of $\ell$ nats with probability $p$, which is referred to as the  \textit{downlink data rate}.
    The generated packets will be transmitted to the slave node according to FCFS policy.
We denote the number of blocks used to deliver a packet as \textit{service time} $S$.
    According to \cite{Dong-2013-WCSP},  $S$ follows Poisson distribution\footnote{Note that we have $j\geq 1$ in our case, which is slightly different from Poisson distribution. In addition, we say the service time is $S=1$ even if its completion time is less than $T_\text{B}$. }
\begin{equation} \label{rt:pk}
    p_j^\text{s} = \Pr\left\{\sum_{i=1}^{j-1} b_i < \ell, \sum_{i=1}^{j} b_i \geq \ell\right\} = \tfrac{\theta^{j-1}}{(j-1)!} e^{-\theta},
\end{equation}
where $j=1,2,\cdots$, and
\begin{equation} \label{df:theta}
    \theta=\tfrac{\lambda\texttt{N}_0\ell}{P_\text{t}T_\text{B}}.
\end{equation}

    The probability generating function (PGF) and the first two order moments of $S$ are, respectively, given by
\begin{eqnarray}
    \label{rt:Sk_pgf}                  G_\text{S}(z) \hspace{-2.75mm} &=& \hspace{-2.75mm} \mathbb{E} \left(z^{S}\right) = ze^{\theta(z-1)}, \\
    \label{rt:Sk_mean}              \mathbb{E}(S) \hspace{-2.75mm} &=& \hspace{-2.75mm} \lim_{z\rightarrow1^-} G'_\text{S}(z) = 1+\theta, \\
    \label{rt:Sk_2_moment}    \mathbb{E}(S^2) \hspace{-2.75mm} &=& \hspace{-2.75mm} \lim_{z\rightarrow1^-} G''_\text{S}(z) + G'_\text{S}(z) = \theta^2+3\theta +1.
\end{eqnarray}

A period when the master node is busy in transmitting packets is referred to as a \textit{downlink busy period} and a period when the data queue of the master node is empty and no new packet arrives is referred to as a \textit{downlink idle period}.
    In  downlink idle periods, the master node transfers energy to the slave node.
Let $\eta$ be the efficiency of downlink energy transfer, the energy received by the slave node would be
\begin{equation} \label{modl:wpt}
    E_n = \eta \gamma_n P_\text{t}T_\text{B}.
\end{equation}

We assume that the node tries its best to deliver its collected information and generates a data packet immediately after the completion of the previous packet, namely, according to the best-effort policy.
    Moreover, the slave node performs transmission to the master node as long as its remaining energy is no less than $P_\text{t}T_\text{B}$.

\subsection{Age of Information} \label{susec:2_3}
\begin{definition}
    In block $n$, uplink age of information  is the difference between  $n$ and the generation time $U(n)$ of the latest received packet at the master node:
    \begin{equation} \nonumber
        \Delta(n) = n - U(n).
    \end{equation}
\end{definition}

Fig. \ref{fig:aoi_ul} presents a sample variation of  uplink AoI with initial age $\Delta_0$.
     The packets are generated at arrival epochs $n_k$ and are completely transmitted at departure epochs $n'_k$.
We denote the time that an uplink packet $k$ stays in the system as  \textit{uplink system time} $T_k$.
    Note that if  downlink power transfer is weak, an uplink packet would take a long period to complete, which may covers several downlink idle periods and busy periods.
Thus,  $T_k$ includes the service time of the packet, the time for harvesting energy, and the time waiting for downlink power transfer (if  any).

\begin{figure}[!t]
\centering
\includegraphics[width=2.95in]{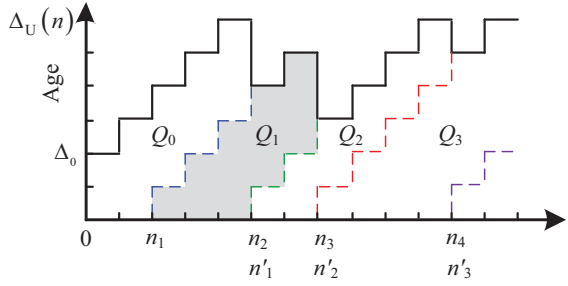}
\caption{Sample path of uplink AoI $\Delta(n)$ (the upper envelope in bold). } \label{fig:aoi_ul}
\end{figure}

It is observed that AoI increases linearly in time and is reset to a smaller value (the age of a newer packet) at the end of departure blocks.
    Over a period of $N$ blocks where $K$ uplink packets are transmitted, the average uplink AoI is defined as
    \begin{equation} \nonumber
        \Delta_{N} = \frac 1N \sum_{n=1}^{N} \Delta(n).
    \end{equation}
Starting from the first block, the area under $\Delta_{}(n)$ can be seen as the concatenation of areas $Q_0, Q_1, \cdots$, and the triangular-like area of width $T_K$.
    Thus,  average uplink AoI can be rewritten as
    \begin{equation} \label{df:av_aoi_ul}
        \widebar\Delta = \lim_{N\rightarrow\infty}\frac 1N \left( Q_0 +\sum_{k=1}^{K-1} Q_k + \frac 12 T_K(T_K+1) \right) .
    \end{equation}

Given the downlink data rate $p$, we denote the achievable uplink data rate as $q(p)$, which is given by
\begin{equation} \label{df:ul_rate}
    q(p) = \lim_{N\rightarrow\infty} \frac{K}{N}.
\end{equation}

\section{Uplink Age of Information} \label{sec:4_aoi_ul}
    In this section, we first investigate the statistic property of the energy harvesting process and the uplink service time, and then derive  average uplink AoI in closed form.

\subsection{Energy Harvesting Process} \label{subsuc:4_A}
Note that  energy transfer efficiency is always smaller than unity.
    In each block, therefore, the harvested energy at the slave node can support one block of transmission at most.
In most cases, therefore, there is no energy left during downlink busy periods at the slave node, which implies that uplink transmissions can only occur during downlink idle periods.
    In a period containing $j$ downlink idle blocks, we denote $e_j$ as the harvested energy. That is,
\begin{equation*}
    e_j=\sum_{i=1}^{j} E_{m_i}=\eta P_\text{t}T_\text{B} \sum_{i=1}^{j} \gamma_{m_i},
\end{equation*}
where $m_i$ is the index of the $i$-th block of the period and $\gamma_i$ is the exponentially distributed power gain.
    It is clear that $e_j$ follows $\text{Erlang}\big(j,\tfrac{\lambda}{\eta P_\text{t}T_\text{B}}\big)$ distribution and the corresponding probability density function is given by
    \begin{equation} \label{eq:f_ej}
        f_{e_j}(x) = \tfrac{\mu^jx^{j-1}e^{-\mu x}}{(j-1)!},
    \end{equation}
    where $\mu=\tfrac{\lambda}{\eta P_\text{t}T_\text{B}}$.

Let $\tau_\text{H}$ be the number of downlink idle blocks for the slave node to accumulate sufficient energy to perform a block of transmission and  $e_r$ be the remaining energy after the previous transmission.
    We have $\Pr\{ \tau_\text{H} = 0\} = \Pr\{ e_r > P_\text{t}T_\text{B} \}$ and
    \begin{equation*} \label{rt:harvesting_time_ul}
        \Pr\{ \tau_\text{H} = j\} = \Pr\{ e_r + e_{j-1}<P_\text{t}T_\text{B}, e_r +  e_{j}\geq P_\text{t}T_\text{B} \},
    \end{equation*}
    for  $j=1,2,\cdots$.
Thus, the actual time to perform the uplink transmission in the $i$-th block is $s_i=\max\{ 1, \tau_{\text{H}i} \}$.

    Since both energy transfer efficiency $\eta$ and average channel power gain $\mathbb{E}[\gamma]=\frac1\lambda$ are quite small in general, the harvested energy in each block is not enough to perform a block of transmission.
Thus, $\tau_\text{H}$ would be larger than 1 almost surely.
    In this case, $e_r$ would be the the remaining part of $E_m$, which is the energy harvested in each block.
According to \eqref{modl:wpt}, $E_m$ follows exponential distribution, which is memoryless.
    Thus, $e_r$ follows the same distribution as $E_m$ and we have,
        \begin{eqnarray}
         \nonumber          \hspace{-2.75mm}&&\hspace{-2.75mm}  \Pr\{ \tau_\text{H} = j\} = \Pr\{ e_{j}<P_\text{t}T_\text{B},  e_{j+1}\geq P_\text{t}T_\text{B} \} \\
         \label{rt:p_tau_h}         \hspace{-2.75mm}&&\hspace{-2.75mm} = \int_0^{P_\text{t}T_\text{B}} f_{e_{j}}(x) \text{d}x \int_{P_\text{t}T_\text{B}-x}^\infty f_{e_{1}}(y) \text{d}y
                                                                                        = \tfrac{(\frac{\lambda}{\eta})^{j}}{j!} e^{\frac{-\lambda}{\eta}}.
    \end{eqnarray}

Since  uplink transmissions suffer the same fading and use the same transmit power as downlink transmissions,  the required number of blocks to deliver an uplink packet also follows distribution law \eqref{rt:pk}.
    However, the slave node often needs to wait for some time to accumulate enough energy to deliver a packet.
Thus, the \textit{uplink service time}  $S_\text{UL}$, which includes the actual service time for transmitting the packet (i.e., $S$), and the time required for harvesting and accumulating energy, must be larger than the downlink service time for the master node to deliver a downlink packet.
In fact, we have
    \begin{equation} \label{rt:service_time_ul}
        S_\text{UL} = \sum_{i=1}^{S} s_i.
    \end{equation}

In particular, the moments of $S_\text{UL}$ are given by the following proposition.
\begin{proposition} \label{prop4:service_time_ul}
    The first-two order moments of the uplink service time $S_\text{U}$ are, respectively,  given by
    \begin{eqnarray}
        \label{rt:e_ul_service_time_1}   \mathbb{E}(S_\text{UL})      \hspace{-2.75mm} &=& \hspace{-2.75mm} \Big( \tfrac{\lambda}{\eta} + e^{-\frac{\lambda}{\eta}} \Big) (1+\theta), \\
        \nonumber                                       \mathbb{E}(S_\text{UL}^2) \hspace{-2.75mm} &=& \hspace{-2.75mm}  \Big( \tfrac{\lambda}{\eta} + e^{-\frac{\lambda}{\eta}} \Big) ^2(\theta^2+2\theta)
                                                                                   +  \Big( \tfrac{\lambda^2}{\eta^2} + \tfrac{\lambda}{\eta} + e^{-\frac{\lambda}{\eta}} \Big) (1+\theta).
    \end{eqnarray}
\end{proposition}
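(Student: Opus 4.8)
The plan is to treat $S_\text{UL}$ as a random sum $\sum_{i=1}^{S} s_i$ of independent, identically distributed summands $s_i = \max\{1,\tau_{\text{H}i}\}$ whose number of terms $S$ is itself random, and to apply the standard compound-distribution moment identities. Conditioning on $S$ reduces the task to two ingredients: the first two moments of a single summand $s=\max\{1,\tau_\text{H}\}$, and the first two moments of the count $S$, the latter already recorded in \eqref{rt:Sk_mean} and \eqref{rt:Sk_2_moment}.

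First I would extract the moments of $s$. Equation \eqref{rt:p_tau_h} shows that $\tau_\text{H}$ is Poisson-distributed with parameter $\lambda/\eta$, so $\mathbb{E}(\tau_\text{H}) = \lambda/\eta$ and $\mathbb{E}(\tau_\text{H}^2) = \lambda/\eta + \lambda^2/\eta^2$. The key observation is that $\max\{1,\tau_\text{H}\}$ coincides with $\tau_\text{H}$ except on the event $\{\tau_\text{H}=0\}$, where it equals $1$ instead of $0$. Hence $\mathbb{E}(s) = \mathbb{E}(\tau_\text{H}) + \Pr\{\tau_\text{H}=0\} = \tfrac{\lambda}{\eta} + e^{-\lambda/\eta}$. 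The same correction $+\Pr\{\tau_\text{H}=0\}$ applies to the second moment, since $s^2$ differs from $\tau_\text{H}^2$ only at $\tau_\text{H}=0$, where it is $1$ rather than $0$; thus $\mathbb{E}(s^2) = \mathbb{E}(\tau_\text{H}^2) + \Pr\{\tau_\text{H}=0\} = \tfrac{\lambda^2}{\eta^2} + \tfrac{\lambda}{\eta} + e^{-\lambda/\eta}$.

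Next I would invoke the random-sum formula. Because uplink fading (which governs $S$) and downlink energy transfer (which governs each $\tau_{\text{H}i}$) occupy different frequency bands, $S$ is independent of the summands $s_i$. Conditioning on $S=k$ then gives $\mathbb{E}(S_\text{UL}\mid S=k) = k\,\mathbb{E}(s)$ and, by separating the diagonal and off-diagonal terms of the double sum, $\mathbb{E}(S_\text{UL}^2\mid S=k) = k\,\mathbb{E}(s^2) + k(k-1)\,[\mathbb{E}(s)]^2$. Averaging over $S$ yields $\mathbb{E}(S_\text{UL}) = \mathbb{E}(S)\,\mathbb{E}(s)$ and $\mathbb{E}(S_\text{UL}^2) = \mathbb{E}(S)\,\mathbb{E}(s^2) + \big(\mathbb{E}(S^2)-\mathbb{E}(S)\big)\,[\mathbb{E}(s)]^2$.

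The final step is substitution: using $\mathbb{E}(S)=1+\theta$ and $\mathbb{E}(S^2)=\theta^2+3\theta+1$, so that $\mathbb{E}(S^2)-\mathbb{E}(S) = \theta^2+2\theta$, together with the summand moments above, delivers both claimed expressions directly. I expect the main obstacle to be the justification of independence between $S$ and the $s_i$ and, relatedly, the clean bookkeeping of the $k(k-1)$ cross-term, rather than any hard computation; once the compound-sum identity is set up correctly, the algebra collapses onto the stated formulas with no further simplification needed.
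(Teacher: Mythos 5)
Your proposal is correct and follows essentially the same route as the paper: both treat $S_\text{UL}=\sum_{i=1}^{S}s_i$ as a compound sum, compute the moments of $s=\max\{1,\tau_\text{H}\}$ from the Poisson law \eqref{rt:p_tau_h}, and apply $\mathbb{E}(S_\text{UL})=\mathbb{E}(S)\mathbb{E}(s)$ together with $\mathbb{E}(S_\text{UL}^2)=\mathbb{E}(S)\mathbb{E}(s^2)+\mathbb{E}(S^2-S)\mathbb{E}^2(s)$. Your explicit derivation of $\mathbb{E}(s)$ and $\mathbb{E}(s^2)$ via the correction term $\Pr\{\tau_\text{H}=0\}$ is merely a more detailed writing of the same computation the paper states directly.
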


\begin{proof}
    See Appendix \ref{prf4:service_time_ul}.
\end{proof}


\subsection{Uplink System Time} \label{subsuc:4_B}
Note that  two busy periods would appear uninterruptedly if a new downlink packet is generated immediately after the end of a busy period.
    In this case, we say that there is a downlink idle period of zero length between the two consecutive downlink busy period, i.e., $I_{\text{D}}=0$.
Note also that the distribution of downlink idle period is given by $\Pr\{ I_{\text{D}}=j \} =(1-p)^j p ,  j=0,1,\cdots$.
    We denote the number of downlink busy periods coming uninterruptedly as $F$, then the distribution and the first two order moments of $F$ can be, respectively, given by
\begin{eqnarray}
       \label{rt:distrib_Fi}   p_j^\text{f} \hspace{-2.75mm}&=&\hspace{-2.75mm} \Pr\{ F=j \} = p^j (1-p), ~~~ j=0,1,2,\cdots, \\
       \label{rt:F_1st}          \mathbb{E}(F) \hspace{-2.75mm}&=&\hspace{-2.75mm}  \tfrac{p}{1-p}, \\
       \label{rt:F_2ed}        \mathbb{E}(F^2) \hspace{-2.75mm}&=&\hspace{-2.75mm}  \tfrac{p(1+p)}{(1-p)^2}.
\end{eqnarray}

    Since there might exist one or more downlink busy periods before each of the blocks of uplink service time $S_\text{UL}$, the uplink system time of a packet can be expressed as
\begin{equation} \label{df:X_be}
    T = S_{\text{UL}} + \sum_{i=1}^{S_{\text{UL}}} \sum_{j=1}^{F_i} B_{\text{D}j}.
\end{equation}

In particular, the first two order moments of downlink busy period $B_{\text{D}}$ are given by the following proposition.
\begin{proposition} \label{prop:5_busy_dl}
    The first-order and second-order moments of downlink  busy period are, respectively, given by
    \begin{eqnarray}
        \label{rt:busy_dl_e} \mathbb{E}(B_{\text{D}}) \hspace{-2.75mm} &=& \hspace{-2.75mm} \tfrac{1+\theta}{1-p-\theta p}, \\
        \label{rt:busy_dl_e2} \mathbb{E}(B_{\text{D}}^2) \hspace{-2.75mm} &=& \hspace{-2.75mm} \tfrac{ (1+\theta)^2 (1-p^2-\theta p^2) +\theta }{(1-p-\theta p)^3}.
    \end{eqnarray}
\end{proposition}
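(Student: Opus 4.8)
The plan is to recognize the downlink busy period $B_\text{D}$ as the busy period of a discrete-time Bernoulli/$G$/1 queue, in which a new packet arrives in each block independently with probability $p$ and each packet requires a service time $S$ distributed according to \eqref{rt:pk}. The key tool is the standard sub-busy-period (branching) decomposition: a busy period is initiated by a single packet whose service occupies $S$ blocks, and every packet that arrives during these $S$ blocks initiates an independent busy period with the same law as $B_\text{D}$. Writing $A$ for the number of such arrivals and $B_1, B_2, \dots$ for i.i.d. copies of $B_\text{D}$, this gives $B_\text{D} = S + \sum_{i=1}^{A} B_i$. Conditioned on $S=s$, the arrivals form $s$ independent Bernoulli$(p)$ trials, so $A$ is Binomial$(s,p)$; this is the same arrival structure that underlies the idle-period law $\Pr\{I_\text{D}=j\}=(1-p)^j p$ recorded earlier.

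First I would translate this decomposition into a functional equation for the PGF $G_{B_\text{D}}(z)=\mathbb{E}(z^{B_\text{D}})$. Conditioning on $S$ and using the binomial PGF of $A$ yields
\begin{equation} \nonumber
    G_{B_\text{D}}(z) = G_\text{S}\big(z(1-p+p\,G_{B_\text{D}}(z))\big),
\end{equation}
with $G_\text{S}$ given by \eqref{rt:Sk_pgf}. Evaluating at $z=1$ confirms $G_{B_\text{D}}(1)=1$, i.e. the busy period is proper under the stability condition $p\,\mathbb{E}(S)<1$, equivalently $1-p-\theta p>0$.

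To extract \eqref{rt:busy_dl_e}, I differentiate the functional equation once and set $z=1$. Writing $w=1-p+p\,G_{B_\text{D}}$, so that $w|_{z=1}=1$ and $w'|_{z=1}=p\,\mathbb{E}(B_\text{D})$, the chain rule gives $\mathbb{E}(B_\text{D})=\mathbb{E}(S)\big(1+p\,\mathbb{E}(B_\text{D})\big)$, whence $\mathbb{E}(B_\text{D})=\mathbb{E}(S)/(1-p\,\mathbb{E}(S))$; substituting $\mathbb{E}(S)=1+\theta$ from \eqref{rt:Sk_mean} yields \eqref{rt:busy_dl_e}. For the second moment I differentiate the functional equation twice at $z=1$, which produces a linear equation for the factorial moment $G''_{B_\text{D}}(1)=\mathbb{E}(B_\text{D}^2)-\mathbb{E}(B_\text{D})$ in terms of $G'_\text{S}(1)=\mathbb{E}(S)$, $G''_\text{S}(1)=\mathbb{E}(S^2)-\mathbb{E}(S)$, and the already-computed $\mathbb{E}(B_\text{D})$. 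Solving this equation and converting the factorial moment back to $\mathbb{E}(B_\text{D}^2)$ gives, after substituting $\mathbb{E}(S)=1+\theta$ and $\mathbb{E}(S^2)=\theta^2+3\theta+1$ from \eqref{rt:Sk_mean}--\eqref{rt:Sk_2_moment}, the claimed expression \eqref{rt:busy_dl_e2}.

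The main obstacle is the second-moment step: differentiating the composite $G_\text{S}(z(1-p+p\,G_{B_\text{D}}(z)))$ twice requires careful bookkeeping of the implicit self-dependence of $G_{B_\text{D}}$, and the final algebra — collecting every term over the common denominator $(1-p-\theta p)^3$ and verifying that the numerator collapses to the compact form $(1+\theta)^2(1-p^2-\theta p^2)+\theta$ — is tedious though routine. I would guard against slips by expanding with the shorthand $c=1+\theta$, where the numerator should reduce cleanly to $c^2+c-1-p^2c^3$, matching the target after re-substituting $\theta=c-1$.
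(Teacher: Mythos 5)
Your proposal is correct: the branching (sub-busy-period) decomposition $B_{\text{D}} = S + \sum_{i=1}^{A} B_i$ with $A\sim\text{Bin}(S,p)$ gives the functional equation $G_{B_\text{D}}(z)=G_\text{S}\big(z(1-p+pG_{B_\text{D}}(z))\big)$, and differentiating twice at $z=1$ yields $\mathbb{E}(B_\text{D}^2)=\tfrac{s_2}{(1-ps_1)^3}+\tfrac{2ps_1^2}{(1-ps_1)^2}+\tfrac{s_1}{1-ps_1}$ with $s_1=1+\theta$, $s_2=\theta^2+2\theta$, whose numerator over $(1-p-\theta p)^3$ is exactly $(1+\theta)^2+\theta-p^2(1+\theta)^3=(1+\theta)^2(1-p^2-\theta p^2)+\theta$, matching \eqref{rt:busy_dl_e2}. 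The paper itself gives no derivation here --- it only cites Proposition 2 of \cite{Dong-Que-2014}, which follows the same PGF-based busy-period outline --- so your argument is essentially the intended one, just written out in full.
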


\begin{proof}
    The proof follows the same outline as that of \textit{Proposition} 2 in \cite{Dong-Que-2014}.
\end{proof}

Based on aforementioned analysis and \textit{Proposition} \ref{prop:5_busy_dl}, we have the following proposition on uplink system time $T$.
\begin{proposition} \label{prop:6_system_ul}
    The first-two order moments of uplink  system time are, respectively, given by,
    \begin{eqnarray}
        \label{rt:system_ul_e}           \mathbb{E}(T) \hspace{-2.75mm} &=& \hspace{-2.75mm}  \mathbb{E}(S_{\text{U}}) \big( 1+ \mathbb{E}(F)\mathbb{E}(B_{\text{D}}) \big), \\
        \nonumber         \mathbb{E}(T^2) \hspace{-3mm}&=&\hspace{-3mm} \mathbb{E}(S_{\text{U}}^2) \Big( 1+ \mathbb{E}(F) \mathbb{E}(B_{\text{D}}) \Big)^2 \\
     \nonumber    \hspace{-3mm}&&\hspace{-3mm}  +\mathbb{E}(S_{\text{U}})
                                                                                            \Big( \mathbb{E}(F) \big( \mathbb{E}(B_{\text{D}}^2) - \mathbb{E}^2(B_{\text{D}})  \big) \\
     \nonumber    \hspace{-3mm}&&\hspace{-3mm}  ~~~~~~~~~~~+ \big( \mathbb{E}(F^2) -\mathbb{E}^2(F) \big) \mathbb{E}^2 (B_{\text{D}}) \Big).
    \end{eqnarray}
\end{proposition}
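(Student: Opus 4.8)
The plan is to recognize the uplink system time in \eqref{df:X_be} as a two-level random (compound) sum and to apply the standard moment identities for such sums twice. First I would absorb the inner double sum into i.i.d.\ summands by writing
\begin{equation} \nonumber
    T = \sum_{i=1}^{S_{\text{UL}}} Z_i, \qquad Z_i = 1 + \sum_{j=1}^{F_i} B_{\text{D}j},
\end{equation}
so that $T$ is a sum of a random number $S_{\text{UL}}$ of terms $Z_i$, and each $Z_i$ is itself one plus a random sum of $F_i$ busy periods $B_{\text{D}j}$. For the compound-sum identities to apply I would invoke the model's independence structure: the interleaving counts $F_i$ and the busy-period lengths $B_{\text{D}j}$ are determined by the downlink Bernoulli$(p)$ arrivals and the downlink fading, which are independent of the uplink fading and harvesting that determine $S_{\text{UL}}$. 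Hence the $Z_i$ are i.i.d.\ and independent of $S_{\text{UL}}$, and within each $Z_i$ the $B_{\text{D}j}$ are i.i.d.\ and independent of $F_i$.

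Next I would apply the mean and variance formulas for a random sum (Wald's identity together with the law of total variance). Conditioning on $S_{\text{UL}}$ gives $\mathbb{E}(T) = \mathbb{E}(S_{\text{UL}})\,\mathbb{E}(Z)$ and
\begin{equation} \nonumber
    \mathrm{Var}(T) = \mathbb{E}(S_{\text{UL}})\,\mathrm{Var}(Z) + \mathrm{Var}(S_{\text{UL}})\,\mathbb{E}^2(Z).
\end{equation}
Converting to the second moment via $\mathbb{E}(T^2) = \mathrm{Var}(T) + \mathbb{E}^2(T)$ and recombining the $\mathrm{Var}(S_{\text{UL}})\,\mathbb{E}^2(Z)$ term with $\mathbb{E}^2(S_{\text{UL}})\,\mathbb{E}^2(Z)$ into $\mathbb{E}(S_{\text{UL}}^2)\,\mathbb{E}^2(Z)$ already reproduces the announced structure, leaving only $\mathbb{E}(Z)$ and $\mathrm{Var}(Z)$ to be evaluated.

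To finish, I would compute $\mathbb{E}(Z)$ and $\mathrm{Var}(Z)$ by applying the same random-sum identities to the inner sum $Y_i = \sum_{j=1}^{F_i} B_{\text{D}j}$. Since $Z_i = 1 + Y_i$ this yields $\mathbb{E}(Z) = 1 + \mathbb{E}(F)\,\mathbb{E}(B_{\text{D}})$ and $\mathrm{Var}(Z) = \mathrm{Var}(Y) = \mathbb{E}(F)\big(\mathbb{E}(B_{\text{D}}^2) - \mathbb{E}^2(B_{\text{D}})\big) + \big(\mathbb{E}(F^2) - \mathbb{E}^2(F)\big)\mathbb{E}^2(B_{\text{D}})$. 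Substituting the moments of $S_{\text{UL}}$ from \textit{Proposition} \ref{prop4:service_time_ul}, those of $F$ from \eqref{rt:F_1st} and \eqref{rt:F_2ed}, and those of $B_{\text{D}}$ from \textit{Proposition} \ref{prop:5_busy_dl} then gives the two closed-form expressions verbatim.

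I expect the main obstacle to be rigorously justifying the independence claims rather than the algebra, which is routine once the structure is fixed. The identities above are valid only if the $Z_i$ are genuinely i.i.d.\ and independent of the stopping variable $S_{\text{UL}}$, and this needs care because the number of interleaved downlink busy periods seen within one uplink packet and the uplink service duration are both driven by the same block-structured timeline. I would argue the required independence from the fact that downlink arrivals and downlink fading (which fix the $F_i$ and the $B_{\text{D}j}$) are statistically independent of the uplink fading and harvesting process (which fix $S_{\text{UL}}$ through \eqref{rt:pk} and the harvesting times of Section~\ref{subsuc:4_A}), so that conditioning on $S_{\text{UL}}$ leaves the law of each $Z_i$ unchanged.
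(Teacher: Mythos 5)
Your proposal is correct and follows essentially the same route as the paper: both treat $T$ as a compound random sum over $S_{\text{UL}}$ of inner random sums of busy periods, and compute the two moments by conditioning on the counts and invoking the independence of $S_{\text{UL}}$, the $F_i$, and the $B_{\text{D}j}$. The only cosmetic difference is that you absorb the additive $S_{\text{UL}}$ into the summands $Z_i=1+\sum_{j=1}^{F_i}B_{\text{D}j}$ and package the second-moment computation as Wald's identity plus the law of total variance, whereas the paper keeps $T=S_{\text{UL}}+\sum_i y_i$ and expands the square into diagonal and off-diagonal terms directly; the two calculations are term-for-term equivalent.
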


\begin{proof}
    See Appendix \ref{prf:prop6_system}.
\end{proof}

\begin{figure*}[htp]   

\hspace{-6 mm}
    \begin{tabular}{cc}
    \subfigure[Average uplink AoI]
    {
    \begin{minipage}[t]{0.5\textwidth}
    \centering
    {\includegraphics[width = 3.2in] {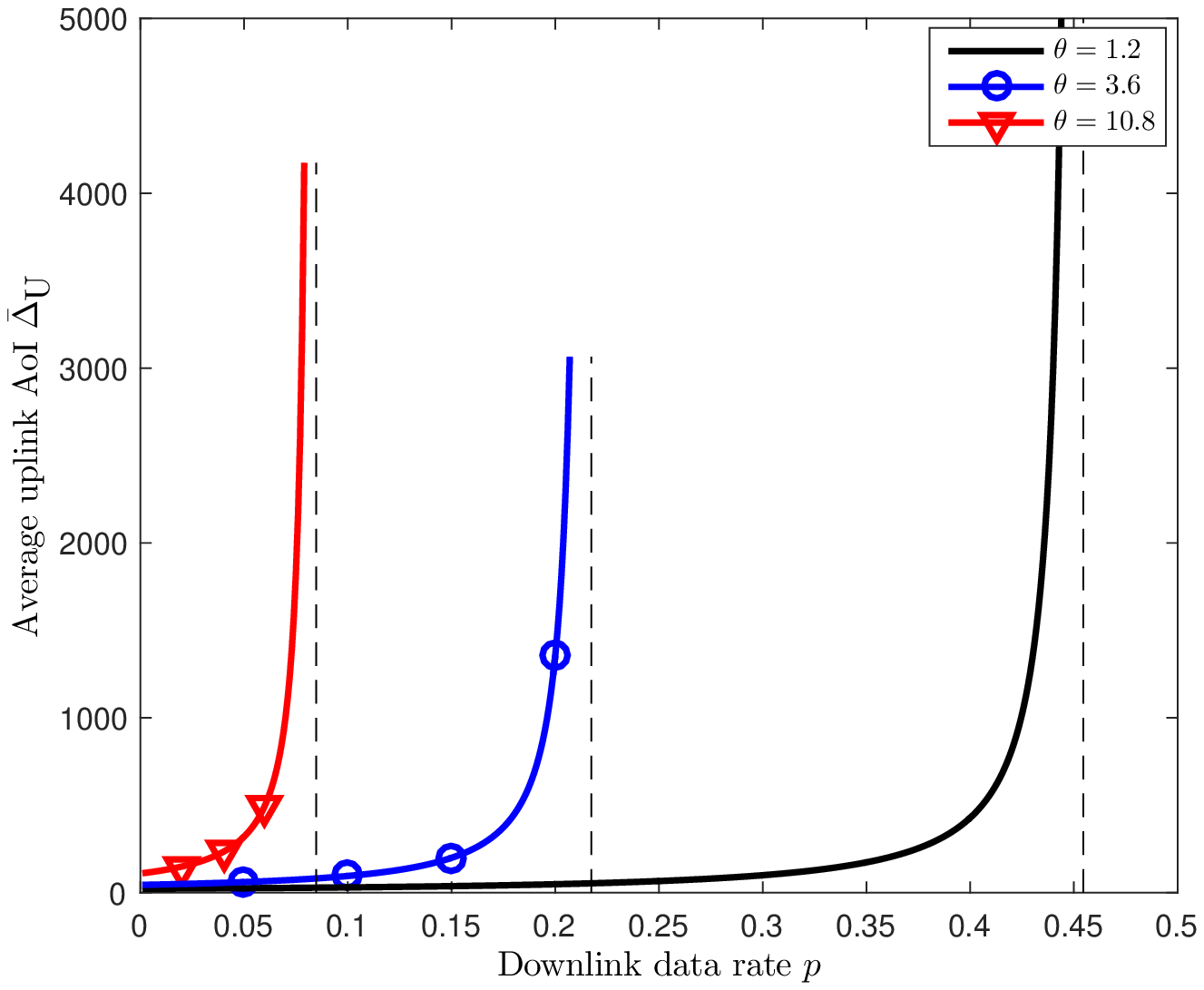} \label{fig:aoi_ul_sim}}
    \end{minipage}
    }

    \subfigure[Average uplink data rate $q$]
    {
    \begin{minipage}[t]{0.5\textwidth}
    \centering
    {\includegraphics[width = 3.2in] {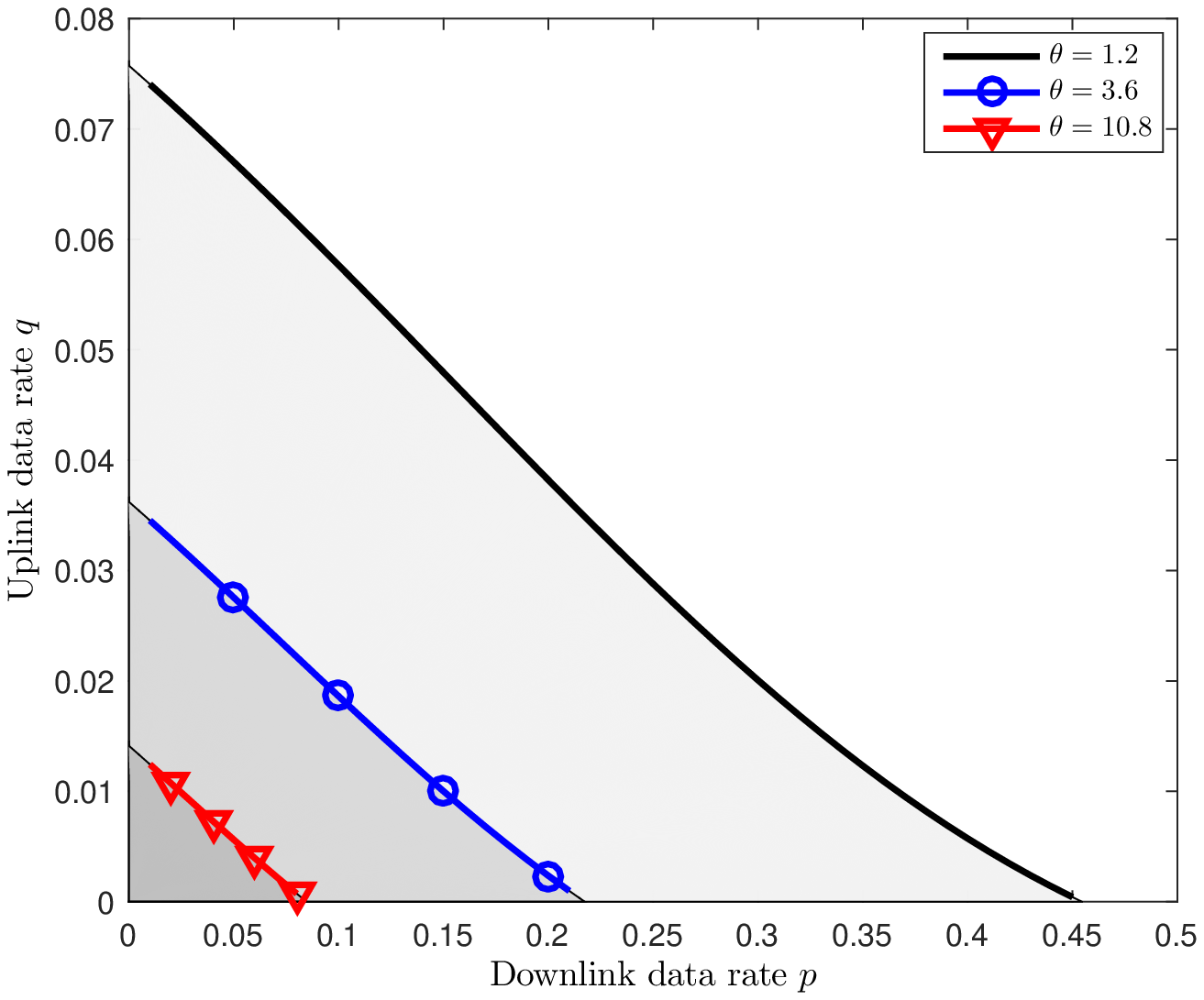} \label{fig:q_ul_sim}}
    \end{minipage}
    }

    \end{tabular}

\caption{Average AoI and data rate of the two-way data exchanging. } \label{fig:aoi}
\end{figure*}

\subsection{Average Uplink AoI}
Based on previous discussions, we have the following theorem on average uplink AoI and uplink data rate.

\begin{theorem} \label{thm:ul}
    If $(\theta+1)p<1$,  average uplink AoI would be 
        \begin{eqnarray} \nonumber
            \hspace{-7mm}   \widebar\Delta \hspace{-3.3mm} &= & \hspace{-3.5mm}    \tfrac {1-p+p^2+\theta p^2}   {2(1-p)(1-p-\theta p)}
                                                              \left(  \tfrac{ \frac{\lambda^2}{\eta^2} + \frac\lambda\eta+e^{-\frac\lambda\eta} }  {\frac\lambda\eta+e^{-\frac\lambda\eta}}
                                                                       + \big(\tfrac\lambda\eta+e^{-\frac\lambda\eta}\big) \tfrac{ 3\theta^2+6\theta+2 }  {1+\theta}
                                                              \right)   \\
            \label{rt:thm_2_ulaoi}                    \hspace{-2.75mm} && \hspace{-2.75mm}  + \tfrac{  p (1+\theta)^2 -p^3(2-p)(1+\theta)^3 +\theta p(1-p)  }
                                                              {2(1-p)(1-p-\theta p)^2(1-p+p^2+\theta p^2)}  +\tfrac 12.
    \end{eqnarray}
    the achievable uplink data rate is given by
        \begin{equation} \label{rt:thm_2_ulq}
            q(p)= \tfrac {(1-p)(1-p-\theta p)}  { \big( \frac{\lambda}{\eta} + e^{-\frac{\lambda}{\eta} } \big) (1+\theta)  (1-p+p^2+\theta p^2)},
        \end{equation}
    where $\theta=\tfrac{\lambda\texttt{N}_0\ell}{P_\text{t}T_\text{B}}$. Otherwise, $\widebar\Delta$ would be infinitely large and $q(p)$ would be zero.
\end{theorem}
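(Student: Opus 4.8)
The plan is to reduce the whole computation to the two moments that are already prepared, namely the mean uplink system time $\mathbb{E}(T)$ and its second moment $\mathbb{E}(T^2)$ from \textit{Proposition}~\ref{prop:6_system_ul}, and to identify both $\widebar\Delta$ and $q(p)$ with renewal--reward ratios. First I would exploit the best-effort policy: since the slave node generates a new uplink packet exactly at the departure of its predecessor, the inter-departure time of the uplink process coincides with the uplink system time $T$, and consecutive system times $T_k$ can be treated as i.i.d.\ copies of $T$, each being built through \eqref{df:X_be} from a fresh service time $S_{\text{UL}}$, a fresh count $F_i$ of interleaved downlink busy periods, and fresh busy periods $B_{\text{D}}$. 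Applying the renewal--reward theorem to the block count in \eqref{df:ul_rate} then gives $q(p)=\lim_{N\to\infty}K/N = 1/\mathbb{E}(T)$; substituting $\mathbb{E}(T)=\mathbb{E}(S_{\text{UL}})\bigl(1+\mathbb{E}(F)\mathbb{E}(B_{\text{D}})\bigr)$ together with \eqref{rt:e_ul_service_time_1}, \eqref{rt:busy_dl_e} and \eqref{rt:F_1st}, and collapsing the factor $1+\mathbb{E}(F)\mathbb{E}(B_{\text{D}})$ to $\tfrac{1-p+p^2+\theta p^2}{(1-p)(1-p-\theta p)}$, yields \eqref{rt:thm_2_ulq} after taking the reciprocal.

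Next I would evaluate the area per sawtooth segment. Since the boundary pieces $Q_0$ and the final triangle in \eqref{df:av_aoi_ul} are $O(1)$ and $O(T_K^2)$ respectively, they vanish after division by $N$, so $\widebar\Delta = \lim_{N\to\infty}\tfrac1N\sum_{k} Q_k = q(p)\,\mathbb{E}(Q)$. The key step is to read off the generic piece in discrete time: once packet $k$ has been received, the age is reset and then climbs by one per block over the $T_{k+1}$ blocks until the next departure, which after careful bookkeeping of the reset instant gives $Q_k = T_kT_{k+1} + \tfrac12 T_{k+1}(T_{k+1}+1)$. Taking expectations under the i.i.d.\ assumption so that $\mathbb{E}(T_kT_{k+1})=\mathbb{E}^2(T)$, one obtains $\mathbb{E}(Q) = \mathbb{E}^2(T) + \tfrac12\mathbb{E}(T^2)+\tfrac12\mathbb{E}(T)$, whence
\begin{equation} \nonumber
    \widebar\Delta = q(p)\,\mathbb{E}(Q) = \mathbb{E}(T) + \frac{\mathbb{E}(T^2)}{2\mathbb{E}(T)} + \frac12 .
\end{equation}

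The remaining work is purely algebraic. I would substitute $\mathbb{E}(T)$ and $\mathbb{E}(T^2)$ from \textit{Proposition}~\ref{prop:6_system_ul}, in turn using \textit{Propositions}~\ref{prop4:service_time_ul} and~\ref{prop:5_busy_dl} and \eqref{rt:F_1st}--\eqref{rt:F_2ed}. Writing $C = 1+\mathbb{E}(F)\mathbb{E}(B_{\text{D}})$, the combination $\mathbb{E}(T)+\tfrac{\mathbb{E}(S_{\text{UL}}^2)C}{2\mathbb{E}(S_{\text{UL}})}$ collapses to the first bracket of \eqref{rt:thm_2_ulaoi}, while the variance contributions $\mathbb{E}(F)\bigl(\mathbb{E}(B_{\text{D}}^2)-\mathbb{E}^2(B_{\text{D}})\bigr) + \bigl(\mathbb{E}(F^2)-\mathbb{E}^2(F)\bigr)\mathbb{E}^2(B_{\text{D}})$, divided by $2C$, produce the second fraction, and the constant $\tfrac12$ is carried over directly. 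Finally, the stability clause follows by noting that $(\theta+1)p<1$ is exactly the condition $1-p-\theta p>0$ that keeps $\mathbb{E}(B_{\text{D}})$ in \eqref{rt:busy_dl_e} finite and positive; when it fails the downlink queue is unstable, the busy periods grow without bound, $\mathbb{E}(T)\to\infty$, and hence $\widebar\Delta\to\infty$ while $q(p)\to0$.

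I expect the main obstacle to be two-fold. The conceptual crux is justifying that consecutive uplink system times may be treated as independent, so that the cross term $\mathbb{E}(T_kT_{k+1})$ reduces to $\mathbb{E}^2(T)$ and carries no covariance; this is where the best-effort regeneration structure must be invoked. The delicate technical point is getting the discrete-time reset exactly right, since an off-by-one in when the freshly received packet becomes the reference age is precisely what distinguishes the additive constant $+\tfrac12$ from $-\tfrac12$. Once the moment expressions are inserted, the subsequent simplification, though lengthy, is routine.
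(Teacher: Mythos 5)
Your proposal follows essentially the same route as the paper's own proof: the renewal--reward identification $q(p)=1/\mathbb{E}(T)$, the reduction $\widebar\Delta = q\,\mathbb{E}(Q_k)$ after discarding the boundary terms $Q_0$ and the final triangle, the decomposition $\mathbb{E}(Q_k)=\mathbb{E}^2(T)+\tfrac12\mathbb{E}(T^2)+\tfrac12\mathbb{E}(T)$ (your indexing $T_{k+1}(T_{k+1}+1)$ versus the paper's $T_k(T_k+1)$ is immaterial in expectation), and the same substitution of the moments from \textit{Propositions}~\ref{prop4:service_time_ul}--\ref{prop:6_system_ul}, together with the same instability argument when $(\theta+1)p\geq 1$. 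The proposal is correct and matches the paper's argument.
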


\begin{proof}
\textit{Theorem} \ref{thm:ul} follows directly from \textit{Proposition} \ref{prop:5_busy_dl} and \ref{prop:6_system_ul}.
    For more details, please see Appendix \ref{prf:thm_2_ul}.
\end{proof}

To obtain more insights, we further investigate the following two special cases.

\begin{corollary} \label{coro:aoi_ul_3}
    In the case $(\theta+1)p \approx o$ where $o$ is an infinitesimal, average uplink AoI and uplink data  rate are, respectively, given by
    \begin{eqnarray*} \label{rt:coro3_aoi_ul}
        \widebar\Delta \hspace{-2.75mm} &=& \hspace{-2.75mm} \frac 12
                                                              \left(  \tfrac{ \frac{\lambda^2}{\eta^2} + \frac\lambda\eta + e^{-\frac{\lambda}{\eta} } }  {\frac\lambda\eta+e^{-\frac{\lambda}{\eta} }}
                                                                       +  \tfrac{ \big(\tfrac\lambda\eta+e^{-\frac{\lambda}{\eta} } \big) (3\theta^2+6\theta+2) }  {1+\theta}
                                                              \right)     +\frac 12, \\
        q(p) \hspace{-2.75mm} &=& \hspace{-2.75mm} \tfrac {1}  { \big( \frac{\lambda}{\eta} + e^{-\frac{\lambda}{\eta} } \big) (1+\theta)  }.
    \end{eqnarray*}
\end{corollary}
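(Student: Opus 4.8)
The plan is to obtain the corollary as a direct specialization of \textit{Theorem} \ref{thm:ul}: I would substitute the regime $(\theta+1)p\to 0$ into the closed-form expressions \eqref{rt:thm_2_ulaoi} and \eqref{rt:thm_2_ulq} and evaluate the resulting limits termwise. Since the stability condition $(\theta+1)p<1$ is automatically satisfied in this regime, both formulas remain valid and no separate finiteness argument is needed. Throughout I would read the hypothesis $(\theta+1)p\approx o$ as $p\to 0$ with $\theta$ held fixed, so that every factor carrying a positive power of $p$ collapses to its constant term.

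First I would handle the simpler quantity $q(p)$. In \eqref{rt:thm_2_ulq} the three polynomial factors $(1-p)$, $(1-p-\theta p)$, and $(1-p+p^2+\theta p^2)$ each tend to $1$ as $p\to 0$, while the remaining factor $\big(\tfrac\lambda\eta+e^{-\lambda/\eta}\big)(1+\theta)$ carries no $p$. Hence $q(p)\to \tfrac{1}{(\lambda/\eta+e^{-\lambda/\eta})(1+\theta)}$, which is precisely the claimed constant limiting rate.

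Next I would treat the three summands of \eqref{rt:thm_2_ulaoi} separately. The prefactor $\tfrac{1-p+p^2+\theta p^2}{2(1-p)(1-p-\theta p)}$ of the large parenthesis has numerator tending to $1$ and denominator tending to $2$, so it tends to $\tfrac12$; the parenthesis itself contains no $p$ and is therefore unchanged, which reproduces the first summand of the corollary. In the middle summand, every term of the numerator $p(1+\theta)^2-p^3(2-p)(1+\theta)^3+\theta p(1-p)$ carries a factor of $p$, whereas the denominator tends to the nonzero constant $2$; hence this summand vanishes in the limit. The trailing $\tfrac12$ is constant, so collecting the surviving pieces yields exactly the stated expression for $\widebar\Delta$.

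The computation is routine, so the only point demanding care is the interpretation of the hypothesis: because the middle-summand numerator contains $p(1+\theta)^2$, one must keep $\theta$ fixed (rather than allowing $\theta$ to grow as $p$ shrinks) in order to guarantee that this term genuinely vanishes. With that reading all the limits above are immediate, and the corollary follows.
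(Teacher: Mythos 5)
Your proposal is correct and matches the paper's own (one-line) proof, which likewise obtains the corollary by substituting $p\approx 0$ into \eqref{rt:thm_2_ulaoi} and \eqref{rt:thm_2_ulq}; you simply spell out the termwise limits that the paper leaves implicit. Your remark that $\theta$ must be held fixed as $p\to 0$ is a reasonable clarification of the hypothesis $(\theta+1)p\approx o$, consistent with how $\theta$ appears as a fixed parameter in the stated limits.
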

\begin{proof}
    The corollary readily follows the approximation $p\approx0$  and equations \eqref{rt:thm_2_ulaoi} and \eqref{rt:thm_2_ulq}.
\end{proof}

The condition $(\theta+1)p\thickapprox0$ indicates that  downlink channel is lightly occupied by information transmission and thus much energy can be transferred to the slave node.
    Under the best-effort policy, therefore, both uplink AoI and uplink data rate would be finite, as shown in \textit{Corollary} \ref{coro:aoi_ul_3}.
Note that the key parameters in this case include the expected service time $\mathbb{E}(S)=1+\theta$, the expected channel power gain $\frac1\lambda$ and the energy transfer efficiency $\eta$.

\vspace{3mm}
\begin{corollary} \label{coro:aoi_ul_4}
    In the case $(\theta+1)p \approx 1-o$ where $o$ is an infinitesimal, average uplink AoI and uplink data rate are, respectively, given by
    \begin{eqnarray*} \label{rt:coro4_aoi_ul}
        \widebar\Delta \hspace{-2.75mm} &=& \hspace{-2.75mm} \left(  \tfrac{ \frac{\lambda^2}{\eta^2}
                                                                        + \frac\lambda\eta +e^{-\frac{\lambda}{\eta} } }  {\frac\lambda\eta+e^{-\frac{\lambda}{\eta} }}
                                                                       +  \tfrac{ \big(\tfrac\lambda\eta+e^{-\frac{\lambda}{\eta} }\big) (3\theta^2+6\theta+2) }  {1+\theta}
                                                                    \right)  \tfrac{1+\theta}{2\theta} \cdot \tfrac{1}{o} +\tfrac 12, \\
        q(p) \hspace{-2.75mm} &=& \hspace{-2.75mm}   \tfrac{\theta}  {(1+\theta)^2\big(\frac\lambda\eta+ e^{-\frac{\lambda}{\eta} } \big)} \cdot o.
    \end{eqnarray*}
\end{corollary}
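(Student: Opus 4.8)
The plan is to read this off from Theorem \ref{thm:ul} as the near-boundary companion of Corollary \ref{coro:aoi_ul_3}, by letting the stability margin shrink to zero. I would introduce the slack $o = 1-(\theta+1)p = 1-p-\theta p$, so that the hypothesis $(\theta+1)p\approx1-o$ says precisely that $o\to0^+$, i.e. $p\to\frac{1}{1+\theta}$. Before substituting, I would record the boundary limits of the recurring polynomial factors: at $p=\frac1{1+\theta}$ one has $1-p\to\frac{\theta}{1+\theta}$ and $1-p+p^2+\theta p^2\to1$ (since $p^2(1+\theta)=\frac1{1+\theta}$ exactly cancels the $-p$ term), while $1-p-\theta p=o$ is the only factor that vanishes.

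Granting these limits, the data-rate claim is immediate. Substituting into \eqref{rt:thm_2_ulq}, the numerator $(1-p)(1-p-\theta p)$ supplies the single vanishing factor $o$ together with the limit $\frac{\theta}{1+\theta}$, the denominator tends to $\bigl(\tfrac\lambda\eta+e^{-\lambda/\eta}\bigr)(1+\theta)\cdot1$, and collecting terms gives $q(p)\sim\frac{\theta}{(1+\theta)^2(\frac\lambda\eta+e^{-\lambda/\eta})}\,o$, so that $q$ decays linearly in the margin, as stated.

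For the average AoI I would Laurent-expand \eqref{rt:thm_2_ulaoi} in $o$ about $o=0$ and retain the leading singular term. The first summand carries a single factor $1-p-\theta p=o$ in its denominator; the bracketed expression is $p$-independent and passes through unchanged, while its prefactor $\frac{1-p+p^2+\theta p^2}{2(1-p)}$ tends to $\frac{1}{2\cdot\theta/(1+\theta)}=\frac{1+\theta}{2\theta}$, producing exactly the advertised coefficient multiplying $\frac1o$; the trailing $+\frac12$ is the regular part that survives the limit.

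The real obstacle is pinning down the order of the pole. The first summand is a simple pole $\propto1/o$, but the second summand of \eqref{rt:thm_2_ulaoi} carries $(1-p-\theta p)^2=o^2$ in its denominator, so the stated $1/o$ behaviour can be the leading term only if its numerator $p(1+\theta)^2-p^3(2-p)(1+\theta)^3+\theta p(1-p)$ itself vanishes to order $o$ at $p=\frac1{1+\theta}$. Checking that numerator at the boundary—and, should it fail to vanish, re-examining which summand actually dominates—is where the genuine work lies; everything else is routine substitution of the two boundary values recorded above. I would therefore begin by factoring $o$ out of every numerator and denominator and only then pass to the limit, so that both the order of the pole and the coefficient of its leading power are determined unambiguously.
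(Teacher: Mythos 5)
Your route is the same as the paper's: the paper's entire proof of this corollary is the single sentence that it ``can be readily proved by using the approximation in equations \eqref{rt:thm_2_ulaoi} and \eqref{rt:thm_2_ulq},'' i.e.\ direct substitution of $p=\tfrac{1-o}{1+\theta}$ into Theorem~\ref{thm:ul}. Your treatment of $q(p)$ is complete and correct, and your boundary values $1-p\to\tfrac{\theta}{1+\theta}$, $1-p+p^2+\theta p^2\to 1$, $1-p-\theta p=o$ are all right. The gap is that you explicitly defer the one step on which the $\widebar\Delta$ claim hinges --- whether the numerator $N(p)=p(1+\theta)^2-p^3(2-p)(1+\theta)^3+\theta p(1-p)$ of the second summand vanishes at $p=\tfrac{1}{1+\theta}$ --- and a proof cannot end with ``this is where the genuine work lies.'' Carrying out that check: $p(1+\theta)^2\to 1+\theta$, $p^3(2-p)(1+\theta)^3\to\tfrac{1+2\theta}{1+\theta}$, and $\theta p(1-p)\to\tfrac{\theta^2}{(1+\theta)^2}$, so $N\to\tfrac{(1+\theta)^3-(1+2\theta)(1+\theta)+\theta^2}{(1+\theta)^2}=\tfrac{\theta^2(2+\theta)}{(1+\theta)^2}>0$.

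Consequently the second summand of \eqref{rt:thm_2_ulaoi} behaves like $\tfrac{\theta(2+\theta)}{2(1+\theta)}\cdot\tfrac{1}{o^2}$ and dominates the first summand's $\Theta(1/o)$ contribution, which you computed correctly. (This is also visible structurally: $\mathbb{E}(B_{\text{D}}^2)=\Theta(o^{-3})$ while $\mathbb{E}(T)=\Theta(o^{-1})$, so $\mathbb{E}(T^2)/(2\mathbb{E}(T))=\Theta(o^{-2})$ --- the usual heavy-traffic blow-up of a busy-period second moment.) The honest conclusion of your own program is therefore that $\widebar\Delta$ diverges like $1/o^2$, not $1/o$, so the corollary's stated leading term does not follow from Theorem~\ref{thm:ul} as written; either the theorem's second summand or the corollary's asymptotics must be in error, a discrepancy the paper's one-line proof glosses over. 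You located exactly the right pressure point, but by leaving it unexamined your proposal neither establishes the statement nor exposes that it fails; completing the check you yourself prescribed is the missing (and decisive) step.
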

\begin{proof}
    This corollary can be readily proved by using the approximation in equations \eqref{rt:thm_2_ulaoi} and \eqref{rt:thm_2_ulq}.
\end{proof}

When $(\theta+1)p$ approaches unity, the downlink channel is very busy and little energy can be harvested at the slave node.
    Thus, the uplink service time would be very large.
As a result, uplink AoI goes to infinity and uplink data rate goes to zero, as observed in \textit{Corollary} \ref{coro:aoi_ul_4}.

\section{Numerical Results}\label{sec5:simulation}
In this section, we investigate the AoI of the two-way data exchanging system via numerical results.
    We set the transmit power of both the master node and the slave node, for both transmitting information and transferring energy, as $P_\text{t}= 0.01$ W.
The system bandwidth is  $\texttt{W}=1$ MHz, the noise spectrum density (including noise figure, etc.) is  $\texttt{N}_0=4\times10^{-7}$.
    The Rayleigh channel parameter is $\lambda=3$,  the block length is $T_\text{B}=10^{-3}$ s, and the energy transfer efficiency is $\eta=0.5$.
For simplicity, we set the distance between the master node and the slave node to $d=1$ m.
    For pecket length, we consider the following three cases: $\ell=10$ nats, $\ell=30$ nats, and $\ell=90$ nats.
By using \eqref{df:theta}, the corresponding parameter $\theta$ can be obtained as $\theta=1.2$, $\theta=3.6$, and  $\theta=10.8$, respectively.
    Note that for any given $\theta$, the maximal downlink data rate enabling a stable queue at the master node is $p_{\max}=\tfrac{1}{1+\theta}$.

We present average uplink AoI in Fig. \ref{fig:aoi_ul_sim}.
    In general, average uplink AoI is  large, especially when downlink data rate $p$ is large.
On one hand, if $p$ is very small, the master node would transfer energy to the slave node for most of the time.
    In this case, the slave node seldom needs to  wait for harvesting energy so that uplink system time is determined only by energy transfer efficiency $\eta$ and actual service time $S$.
As $p$ approaches zero, therefore, average uplink AoI would converge to a constant, as shown in Fig. \ref{fig:aoi_ul_sim} and \textit{Corollary} \ref{coro:aoi_ul_3}.
    On the other hand, as $p$ approaches the maximal downlink data rate $p_\text{max}$, average uplink AoI goes to infinity, which is consistent with \textit{Corollary} \ref{coro:aoi_ul_4}.

We plot how uplink data rate $q$ (see \eqref{rt:thm_2_ulq}) varies when downlink data rate $p$ is changed in Fig. \ref{fig:q_ul_sim}.
    We observe that $q$ decreases rapidly when $p$ is increased.
In particular, $q$ reduces to zero as $p$ approaches $p_\text{max}$.
    Moreover, for each given $\theta$, the area under the curve can be regarded as the \textit{achievable region} of  data rate pair $(p,q)$.
That is, each point under the curve is achievable while the points above the curve are not.
    Since the power supply at the master node is the only energy source of the system, downlink data rate and uplink data rate cannot be optimized at the same time.
    Thus, the curves in Fig. \ref{fig:q_ul_sim} can also be regarded as best-achievable tradeoff between downlink data rate and uplink data rate.
To see this clearly, one may use a weighted-sum characterization of the system data rate.
    That is, a data rate pair $(p,q)$ is said to be optimal if it maximizes the  weighted sum data rate $wp+(1-w)q$, where $0\leq w\leq1$ is a priority constant.
Intuitively, the solution to this optimization problem can be obtained by searching the tangent point between line $wp+(1-w)q=c$ and the curves in Fig. \ref{fig:q_ul_sim}.

\section{Conclusion}\label{sec6:conclusion}
In this paper, we have studied the AoI of a two-way data exchanging system with a unique power supply at the master node.
    We obtained average uplink AoI in closed form and presented the corresponding asymptotic behavior.
Based on these results, we can also determine the achievable region of  downlink data rate and uplink data rate.
    It is clear that downlink performance is constrained  by the limited transmit power of the master node.
The uplink performance, however, is also affected by downlink data rate $p$.
    To be specific, average uplink AoI would be smaller and average uplink data rate would be larger if $p$ is decreased.
    Since the performance of downlink and uplink cannot be optimized at the same time, one needs to find the tradeoff between them under some criteria, e.g., weighted-min/max, as discussed in Subsection \ref{sec5:simulation} and Fig. \ref{fig:q_ul_sim}.
Thus, the obtained results have presented a full characterization of the data exchanging capability of the two-way system.
    Note that this work has focused on the FCFS serving discipline.
Considering the performance of the system under other serving disciplines and packet management is also very interesting and will be explored in our future work.

\appendix

\renewcommand{\theequation}{\thesection.\arabic{equation}}
\newcounter{mytempthcnt}
\setcounter{mytempthcnt}{\value{theorem}}
\setcounter{theorem}{2}
\addcontentsline{toc}{section}{Appendices}\markboth{APPENDICES}{}

\subsection{Proof of Proposition \ref{prop4:service_time_ul}} \label{prf4:service_time_ul}
\begin{proof}
    Since the actual time to perform a block of uplink transmission is $s=\max\{ 1, \tau_\text{H} \}$, we have $\Pr\{s=1\} = \Pr\{\tau_\text{H}=0\}  + \Pr\{\tau_\text{H}=1\} = \big(1+\frac\lambda\eta\big) e^{-\lambda\eta}$.

    Based on \eqref{rt:p_tau_h} and \eqref{rt:service_time_ul}, we have
        \begin{eqnarray*}
        \mathbb{E}(S_\text{U})  \hspace{-2.75mm}&=&\hspace{-2.75mm} \mathbb{E}(S) \mathbb{E}(s)= \Big( \tfrac{\lambda}{\eta} + e^{-\frac{\lambda}{\eta}} \Big) (1+\theta), \\
        \mathbb{E}(S_\text{U}^2)  \hspace{-2.75mm}&=&\hspace{-2.75mm} \mathbb{E} \left(\sum_{i=1}^S s_i^2 + \sum_{i=1}^S \sum_{j\neq i}^S s_is_j  \right)\\
        \hspace{-2.75mm}&=&\hspace{-2.75mm} \mathbb{E}(S) \mathbb{E}(s^2)  + \mathbb{E}(S^2-S) \mathbb{E}^2(s) \\
        \hspace{-2.75mm}&=&\hspace{-2.75mm}   \Big( \tfrac{\lambda}{\eta} + e^{-\frac{\lambda}{\eta}} \Big) ^2(\theta^2+2\theta)
                                                                                   +  \Big( \tfrac{\lambda^2}{\eta^2} + \tfrac{\lambda}{\eta} + e^{-\frac{\lambda}{\eta}} \Big) (1+\theta).
    \end{eqnarray*}
     This completes the proof of the proposition.
\end{proof}

\subsection{Proof of Proposition \ref{prop:6_system_ul}} \label{prf:prop6_system}
\begin{proof}
    According to the definition of uplink system time \eqref{df:X_be} and the independency among downlink busy periods,
    \begin{eqnarray}
        \nonumber        \mathbb{E}(T) \hspace{-2.75mm}&=&\hspace{-2.75mm} \mathbb{E}\left(S_\text{UL} + \sum_{i=1}^{S_\text{UL}} \sum_{j=1}^{F_i} B_{\text{D}j} \right) \\
         \label{dr:X_be_e}                                                           \hspace{-2.75mm}&=&\hspace{-2.75mm} \mathbb{E}(S_\text{UL} ) + \mathbb{E}(S_\text{UL} )\mathbb{E}(F )\mathbb{E}( B_{\text{D}} ),
    \end{eqnarray}
    where $\mathbb{E}(S_\text{UL} ), \mathbb{E}(F )$, and $\mathbb{E}( B_{\text{D}} )$ are given by \eqref{rt:e_ul_service_time_1}, \eqref{rt:F_1st}, and \eqref{rt:busy_dl_e}, respectively.

    Denote $y_i=\sum_{j=1}^{F_i} B_{\text{D}j}$, we then have
    \begin{eqnarray}
        \nonumber      \hspace{-10mm}&&\hspace{-3mm}  \mathbb{E}(T^2) = \mathbb{E}\left(\Big(S_\text{UL} + \sum_{i=1}^{S_\text{UL}} y_i \Big)^2\right) \\
        \label{dr:X_be_e21}         \hspace{-10mm}&&\hspace{-3mm} = \mathbb{E}(S_\text{UL}^2 ) \hspace{-0.5mm}+\hspace{-0.5mm}
                                                                                                                2\mathbb{E}(S_\text{UL}^2 )\mathbb{E}(F )\mathbb{E}( B_{\text{D}} )
                                                                                                                  \hspace{-0.5mm}+\hspace{-0.5mm}  \mathbb{E}\left(\Big(\sum_{i=1}^{S_\text{UL}} y_i \Big)^2 \right),
    \end{eqnarray}
    where the last term is given by
    \begin{eqnarray}
        \nonumber        \hspace{-3mm}&&\hspace{-3mm} \mathbb{E}\left(\Big(\sum_{i=1}^{S_\text{UL}} y_i \Big)^2 \right)
                                                                                                = \mathbb{E}\left(\sum_{i=1}^{S_\text{UL}} y_i^2 +  \sum_{i_1=1}^{S_\text{UL}}  \sum_{i_2\neq i_1}^{S_\text{UL}} y_{i_1}y_{i_2} \right) \\
        \nonumber        \hspace{-2.75mm}&&\hspace{-2.75mm} = \mathbb{E}(S_\text{UL} ) \mathbb{E}(F ) \mathbb{E}(B_\text{D}^2 )
                                                                                                                                                                                                                + \mathbb{E}(S_\text{UL} ) \mathbb{E}(F^2-F ) \mathbb{E}^2(B_\text{D} )  \\
       \label{dr:X_be_e23}                                            \hspace{-2.75mm}& &\hspace{-2.75mm}    +\mathbb{E}(S_\text{UL}^2-S_\text{UL} ) \mathbb{E}^2(F ) \mathbb{E}^2( B_{\text{D}} ).
    \end{eqnarray}

    The proof of Proposition \ref{prop:6_system_ul} is readily completed by combing results \eqref{dr:X_be_e}--\eqref{dr:X_be_e23}.

\end{proof}

\subsection{Proof of Theorem \ref{thm:ul}}  \label{prf:thm_2_ul}
\begin{proof}
In the downlink transission, note that $p$ is the data rate  and $\mathbb{E}(S_k)=1+\theta$ is the average service time.
    Thus, the queue length at the master node would be infinitely large if $p(\theta+1) \geq 1$.
    That is, the master node will be always in the busy period and no energy can be transferred to the slave node.
In this case, the average uplink AoI would be infinitely large and the uplink data rate would be zero.
    Next, we consider the case of $p(\theta+1)<1$.

According to the definition of uplink data rate \eqref{df:ul_rate}, we have
\begin{eqnarray}
     \nonumber              q(p) \hspace{-2.75mm}&=&\hspace{-2.75mm} \lim_{N\rightarrow\infty} \frac{K}{N}=\frac{1}{\mathbb{E}(T) } \\
     \label{rt:ul_rate}      \hspace{-2.75mm}&=&\hspace{-2.75mm}\tfrac{(1-p)(1-p-\theta p)}
                                                                                                                                 { \big( \frac{\lambda}{\eta} + e^{-\frac{\lambda}{\eta}} \big) (1+\theta)(1-p+p^2+\theta p^2)},
\end{eqnarray}
where $\mathbb{E}(T)$ is given by \eqref{rt:system_ul_e}.

In the definition of $\widebar\Delta$  (cf. \eqref{df:av_aoi_ul}), we note that $Q_0$ is finite and $T_k$ is finite in probability. Thus, the average uplink AoI can be rewritten as
\begin{eqnarray}
        \nonumber   \widebar\Delta \hspace{-3mm} &=& \hspace{-3mm}  \lim_{N\rightarrow\infty} \frac 1N\sum_{k=1}^{K-1} Q_k \\
        \nonumber                                      \hspace{-3mm} &=& \hspace{-3mm} \lim_{N\rightarrow\infty} \frac {K-1}{N} \frac {1} {K-1} \sum_{k=1}^{K-1} Q_k \\
        \label{apx:avg_aoi_ul_1}             \hspace{-3mm} &=& \hspace{-3mm} q \mathbb{E} (Q_k).
    \end{eqnarray}

Note that the average of area $Q_k$ is given by
\begin{eqnarray}
    \nonumber    \mathbb{E} (Q_k) \hspace{-2.75mm} &=& \hspace{-2.75mm} \mathbb{E} (T_k T_{k+1}  + \tfrac12T_{k}(T_{k}+1) ) \\
    \label{apx:avg_aoi_ul_2}            \hspace{-2.75mm} &=& \hspace{-2.75mm} \mathbb{E}^2 ( T) + \tfrac12\mathbb{E} ( T^2) + \tfrac12 \mathbb{E} ( T).
\end{eqnarray}

Using the results in \textit{Proposition} \ref{prop:6_system_ul} and \eqref{rt:ul_rate}--\eqref{apx:avg_aoi_ul_2}, the average uplink AoI can be rewritten as
\begin{eqnarray*}
        \widebar\Delta \hspace{-2.75mm} &=& \hspace{-2.75mm}\mathbb{E} ( T) + \frac12+ \frac12 \frac{\mathbb{E} ( T^2) }
                                                                                                                                                            { \mathbb{E} ( T) } \\
                                                              \hspace{-2.75mm} &=& \hspace{-2.75mm} \tfrac {1-p+p^2+\theta p^2}   {2(1-p)(1-p-\theta p)}
                                                              \left(  \tfrac{ \frac{\lambda^2}{\eta^2} + \frac\lambda\eta+e^{-\frac\lambda\eta} }  {\frac\lambda\eta+e^{-\frac\lambda\eta}}
                                                                       + \big(\tfrac\lambda\eta+e^{-\frac\lambda\eta}\big) \tfrac{ 3\theta^2+6\theta+2 }  {1+\theta}
                                                              \right)   \\
                                                              \hspace{-2.75mm} && \hspace{-2.75mm}  + \tfrac{  p (1+\theta)^2 -p^3(2-p)(1+\theta)^3 +\theta p(1-p)  }
                                                              {(1-p)(1-p-\theta p)^2(1-p+p^2+\theta p^2)}  +\tfrac 12.
\end{eqnarray*}

Thus, the proof of \textit{Theorem} \ref{thm:ul} is completed.
\end{proof}

\section*{Acknowledgement}
This work was supported in parts by the National Natural Science Foundation of China (NSFC) under Grant 61701247,  the Jiangsu Provincial Natural Science Research Project under Grant 17KJB510035,
and the Startup Foundation for Introducing Talent of NUIST under Grant 2243141701008;
    Pingyi Fan's work was supported in parts by the China Major State Basic Research Development Program (973 Program) No. 2012CB316100(2) and the National Natural Science Foundation of China (NSFC) No. 61171064 and NSFC No. 61621091.

\small{
\bibliographystyle{IEEEtran}

\begin{thebibliography}{11}

\bibitem{IoT-2017}
``Cisco visual networking index: Global mobile data traffic forecast update, 2015-2020," Feb. 2016, \url{http://www.cisco.com/c/en/us/solutions/collateral/service-provider/visual-networking-index-vni/
mobile-white-paper-c11-520862.pdf}.

\bibitem{Vnet-1-2011}
S. Kaul, M. Gruteser, V. Rai, and J. Kenney, ``Minimizing age of information in vehicular networks," in \textit{Proc. IEEE SECON}, Salt Lake, Utah, USA, Jun. 2011, pp. 350--358.

\bibitem{Vnet-2-2011}
S. Kaul, R. Yates, and M. Gruteser, ``On piggybacking in vehicular networks,¡± in \textit{Proc. IEEE Global Tel. Conf. (GLOBECOM)}, Houston, Texas, USA, Dec. 2011, pp. 1--5.

\bibitem{Kam-AoI-2015}
C. Kam, S. Kompella and A. Ephremides, ``Experimental evaluation of the age of information via emulation," in \textit{Proc. IEEE Military Commun. Conf. (MILCOM)}, Tampa, FL., USA, Oct. 2015, pp. 1070--1075.

\bibitem{Dong-Que-2014}
Y. Dong, P. Fan, ``A queueing characterization of information transmission over block fading Rayleigh channels in the low SNR regime," \textit{IEEE Trans. Veh. Technol.}, vol. 63, no. 8, pp. 3726--3740, Aug. 2014.

\bibitem{Dong-2013-WCSP}
Y. Dong and P. Fan, ``Queueing Analysis for Block Fading Rayleigh Channels in the Low SNR Regime", in  \textit{Proc. Int. Conf. Wireless Commun. Signal Process. (WCSP)}, Hangzhou, China, Oct. 2013, pp. 1--6.

\bibitem{Dong-DTL-2012}
Y. Dong, Q.  Wang, P. Fan, and K. B. Letaief, ``The deterministic time-linearity of service provided by fading channels," \textit{IEEE Trans. Wireless Commun.}, vol.
11, no. 5, pp. 1666-1675, May 2012.

\bibitem{Yates-2012-age}
S. K. Kaul, R. D. Yates, and M. Gruteser, ``Real-time status: How often should one update?" in \textit{Proc. IEEE INFOCOM}, Orlando, FL, USA, Mar. 2012, pp. 2731--2735.

\bibitem{Inoue-2017-distri}
Y. Inoue, H. Masuyama, T. Takine, and T. Tanaka, ``The stationary distribution of the age of information in FCFS single-server queues," in \textit{Proc. IEEE Int. Symp. Inf. Theory (ISIT)}, Aachen, Germany, Jun. 2017, pp. 571--575.

\bibitem{Sun-2016-mlt-sver}
A. M. Bedewy, Y. Sun, and N. B. Shroff, ``Optimizing data freshness, throughput, and delay in multi-server information-update systems," \textit{Proc. IEEE Int. Symp. Inf. Theory (ISIT)}, Barcelona, Spain, Jul. 2016, pp.  2569--2573.

\bibitem{Sun-2016-zero-wait}
Y. Sun, E. U. Biyikoglu, R. D. Yates, C. E. Koksal and N. B. Shroff,  ``Update or wait: How to keep your data fresh,"  \textit{IEEE Trans. on Inform. Theory}, vol. 63, no. 11, pp. 7492--7508, Nov. 2017.

\bibitem{Yates-2017-erasure}
R. D. Yates, E. Najm, E. Soljanin, and J. Zhong, ``Timely updates over an erasure channel,"  in \textit{Proc. IEEE Int. Symp. Inf. Theory (ISIT)}, Aachen, Germany, Jun. 2017, pp. 316--320.

\bibitem{Ephremides-2016-manage}
M. Costa, M. Codreanu, and A. Ephremides, ``On the age of information in status update systems with packet management," \textit{IEEE Trans. on Inform. Theory}, vol. 62, no. 4, pp. 1897--1910, Apr. 2016.

\bibitem{Milcom-2016-erasure}
C. Kam, S. Kompella, G. D. Nguyen, J. E. Wieselthier, and A. Ephremides, ``Controlling the age of information: Buffer size, deadline, and packet replacement," in \textit{Proc. IEEE Mil. Commun. Conf. (MILCOM)}, Baltimore, MD, USA,  Nov. 2016, pp. 301--306.

\bibitem{Yates-2012-lazy-eh}
R. D. Yates, ``Lazy is timely status updates by an energy harvesting source," in \textit{Proc. IEEE Int. Symp. Inf. Theory} (\textit{ISIT}),  Hong Kong, China, Jun. 2015, pp. 3008--3012.

\bibitem{Yangjing-2017-eh}
X. Wu,  J. Yang, and J. Wu, ``Optimal status update for age of information minimization with an energy harvesting source," in \textit{Proc. IEEE Int. Conf. Commun. (ICC)}, Paris, France, May 2017, pp. 1--6.

\bibitem{Tan-2017-threshold}
B. T. Bacinoglu and  E. U. Biyikoglu, ``Scheduling status updates to minimize age of information with an energy harvesting sensor," in \textit{Proc. IEEE Int. Symp. Inf. Theory (ISIT)}, Aachen, Germany, Jun. 2017, pp. 1122--1126.

\bibitem{Yate-2016-multisource}
R. D. Yates and  S. K. Kaul, ``The age of information:  Real-time status updating by multiple sources," [Online]. Available: arXiv:1608.08622v1.

\bibitem{Huang-2015-multiclass}
L. Huang and E. Modiano, ``Optimizing age-of-information in a multi-class queueing system,"  in \textit{Proc. IEEE Int. Symp. Inf. Theory} (\textit{ISIT}),  Hong Kong, China, Jun. 2015, pp. 1681--1685.

\bibitem{Sun-2017-multihop}
A. M. Bedewy, Y. Sun and N. B. Shroff, ``Age-optimal information updates in multihop networks," in \textit{Proc. IEEE Int. Symp. Inf. Theory (ISIT)}, Aachen, Germany, Jun. 2017, pp.  576--580.

\bibitem{bio-2012-WPT}
A. Yakovlev, S. Kim, and A. Poon, ``Implantable biomedical devices: Wireless powering and communication," \textit{IEEE Commun. Mag.,} vol. 50, no. 4, pp. 152--159, Apr. 2012.



\bibitem{Foster-1999-book}
P. Br\'{e}maud, ``Lyapunov functions and martingales,'' in \textit{Markov Chains}, New York, USA, Springer, 1999, pp. 167--193.

\end{thebibliography}

}
\end{document}